    \DeclareMathOperator{\tr}{tr}
\theoremstyle{plain}
\newtheorem{theorem}{Theorem}
\newtheorem{assumption}{Assumption}
\newtheorem{cor}{Corollary}
\theoremstyle{definition}
\theoremstyle{remark}
\newtheoremstyle{specialcasestyle}{1mm}{1mm}{\upshape}{}{\bfseries\upshape}{.}{0mm}{}
\theoremstyle{specialcasestyle}
\newcommand{\bPhi}{\boldsymbol{\Phi}}
  \newcommand{\bT}{{\bf T}}
   \newcommand{\bR}{{\bf R}}
     \newcommand{\bC}{{\bf C}}
  \newcommand{\bb}{{\bf b}}
  \newcommand{\bA}{{\bf A}}
    \newcommand{\bD}{{\bf D}}
    \newcommand{\bI}{{\bf I}}
      \newcommand{\bh}{{\bf h}}
      \newcommand{\bH}{{\bf H}}
      \newcommand{\wbH}{\widehat{\bf H}}
      \newcommand{\wbHh}{\widehat{\bf H}^{\mbox{\tiny H}}}
      \newcommand{\wbh}{\widehat{\bf h}}
      \newcommand{\wbhh}{\widehat{\bf h}^{\mbox{\tiny H}}}
      \newcommand{\bx}{{\bf x}}
      \newcommand{\bZs}{{\bf Z}^{\mbox{\tiny S}}}
       \newcommand{\bZm}{{\bf Z}^{\mbox{\tiny M}}}
          \newcommand{\bz}{{\bf z}}
      \newcommand{\bg}{{\bf g}}
      \newcommand{\bQ}{{\bf Q}}
       \newcommand{\herm}{^{\mbox{\tiny H}}}
     \newcommand{\MRC}{^{\mbox{\tiny MRC}}}
    \newcommand{\LMMSE}{^{\mbox{\tiny S-MMSE}}}
    \newcommand{\MMMSE}{^{\mbox{\tiny M-MMSE}}}
\newcommand{\width}{0.8\linewidth}
\newcommand{\height}{6.0cm}
\newcommand{\asto}{{\xrightarrow[N\to\infty]{a.s.}} \hspace{0.1cm}}
\begin{document}
      \title{
Line-of-Sight and Pilot Contamination Effects on Correlated Multi-cell Massive MIMO Systems}

\author{\authorblockN{Ikram Boukhedimi, Abla Kammoun, and Mohamed-Slim Alouini}\\
\authorblockA{Computer, Electrical, and Mathematical Sciences and Engineering (CEMSE) Division,\\
King Abdullah University of Science and Technology (KAUST),\\Thuwal, Makkah Province, Kingdom of Saudi Arabia.\\Email: \{ikram.boukhedimi, abla.kammoun, slim.alouini\}@kaust.edu.sa}}
\maketitle

\begin{abstract}
This work considers the uplink (UL) of a multi-cell massive MIMO system with $L$ cells, having each $K$ mono-antenna users communicating with an $N-$antennas base station (BS). The channel model involves Rician fading with distinct per-user Rician factors and channel correlation matrices and takes into account pilot contamination and imperfect CSI. The objective is to evaluate the performances of such systems with different single-cell and multi-cell detection methods. In the former, we investigate MRC and single-cell MMSE (S-MMSE); as to the latter, we are interested in multi-cell MMSE (M-MMSE) that was recently shown to provide unbounded rates in Rayleigh fading. The analysis is led assuming the infinite $N$ limit and yields informative closed-form approximations that are substantiated by a selection of numerical results for finite system dimensions. Accordingly, these expressions are accurate and provide relevant insights into the effects of the different system parameters on the overall performances. \end{abstract}
\begin{IEEEkeywords}
Massive MIMO systems, unlimited capacity, correlated Rician fading, pilot contamination, favorable propagation.
\end{IEEEkeywords}
\section{Introduction}

{ By using a large number of antennas, and performing efficient spatial multiplexing, massive MIMO can generate scalable ergodic capacities { through fairly simple linear processing techniques}}. Consequently, it is a key-enabling technology of the upcoming 5G systems \cite{noncooperative-Marzetta2010,5G-Andrews2014,MMIMO-Lu2014}. Nevertheless, despite the substantial advantages of massive MIMO, many issues remain to be tackled. In multi-cell networks, studies have shown that interference emanating from pilot  contamination is a fundamental impairment that limits the achievable capacity \cite{MMIMO-Lu2014,pilot-Marzetta2012,HowMany-Jacob2013}. Over the past couple of years, massive MIMO has been extensively investigated in order to further enhance its resultant performances, and ultimately, overcome these limitations, (see \cite{noncooperative-Marzetta2010,Large-Wagner2012, HowMany-Jacob2013,lowcomplexity-abla2014,Coordinated-Ikram2017,Massive-Emil2018} and references therein). Particularly, the authors in \cite{Massive-Emil2018} show that when multi-cell combing schemes are employed, pilot contamination is no longer a fundamental asymptotic limitation, since the rates can in fact scale {with the number of antennas as this latter grows large}. It is important to note that most of these works consider Rayleigh fading channels which represent scattered signals. By the same token, they do not encompass the possibility of a Line-of-Sight (LoS) component which is commonly present in practical wireless propagation scenarios modeled by Rician fading.

\par Lately, the performances of massive MIMO with Rician fading channels have received a lot of interest \cite{Power-Zhang2014,Beamforming-Yue2015, asymptotic-falconet2016, Downlink-Zhao2016,Channel-Wu2017, Asymptotic-Luca2017}. Among these, few works consider the multi-cell setting with pilot contamination \cite{Downlink-Zhao2016,Channel-Wu2017,Asymptotic-Luca2017}. In \cite{Downlink-Zhao2016}, the authors examine channel estimation schemes for downlink transmissions of an uncorrelated massive MIMO system affected by pilot contamination. In the same line, \cite{Channel-Wu2017} proposes a novel technique that uses LoS estimates for channel estimation  in order to minimize the pilot contamination effects. Furthermore, a large system analysis relying on some recent random matrix theory results is led in \cite{Asymptotic-Luca2017}, where the authors analyze regularized-ZF and maximum-ratio-transmit precoders in the downlink of a massive MIMO system with uncorrelated Rician channels. \par { The impact of LoS propagation conditions on massive MIMO systems has not been sufficiently investigated, although several issues arise in this context. Particularly, establishing whether factors such as pilot contamination remain performance limiting, or the circumstances under which LoS conditions become beneficial, will contribute to make massive MIMO more resilient to such elements and improve the overall performances. Accordingly, investigating these issues constitutes the main focus of the present work.} To this end, we consider the uplink (UL) of a multi-cell massive MIMO system with a comprehensive system model that includes imperfect channel estimation, pilot contamination, and distinct per-user Rician factor and channel correlation matrix. Furthermore, we conduct a comparative analysis of different combining schemes. Specifically, we first investigate the performances when using single-cell detectors wherein only local channels are used to process the received signal. In this context, we examine the UL rates resulting from Maximum-Ratio-Combining (MRC) and Minimum-Mean-Square-Error (which we shall denote S-MMSE to refer to single-cell detection) receivers. Second, we analyze multi-cell combining where local and inter-cell channels are utilized in the design of the receiver. Accordingly, we consider multi-cell MMSE (M-MMSE), where we extend the study in \cite{Massive-Emil2018} to correlated Rician fading. Ultimately, we aim at identifying the limitations of the three detectors with respect to propagation conditions and the different forms of undergone interference. To this end, assuming the large-antenna limit, we derive closed-form approximations of the UL rate using some basic asymptotic tools such as the Law of Large Numbers (LLN). These expressions are tractable and instructive as they allow to investigate the effects of the LoS and pilot contamination on the spectral efficiency. Our study reveals that in single-cell processing, the system remains limited by pilot contamination and that the specular signals enhance the achievable rate, notwithstanding some LoS-induced interference. Furthermore, we demonstrate that this latter dissipates under favorable propagation conditions for MRC, and can be mitigated through certain designs of the S-MMSE. { As to M-MMSE, we find that the UL rate scales with the number of antennas, assuming asymptotically linearly independent correlation matrices as proposed in \cite{Massive-Emil2018}, for Rayleigh fading. Consequently, we validate, in multi-cell correlated Rician fading settings, the conclusions of \cite{Massive-Emil2018} that M-MMSE outperforms both single-cell detection techniques.} 
 
 \par The remainder of the paper is organized as follows\footnote{\it Notations: In the sequel, bold upper and lowercase characters refer to matrices and vectors, respectively. We use $(.)^{\text{H}}$ , $\tr(.)$ and $(.)^{-1}$ to denote the conjugate transpose, the trace of a matrix and the inverse operations, respectively. $log(.)$ is the natural logarithm, the $N \times N$ identity matrix is denoted $\textbf{I}_{N}$, and $\boldsymbol{0}_N$ is the $N\times N$ all zeros matrix. Plus, $\left[{\bf A} \right]_{ij}$ is the element on the $i-$th row and $j-$th column of matrix $\bf A$, and $diag\left\{ a_i \right\}_{i=1}^{N}$ is the $N\times N$ diagonal matrix with $a_i$ being its $i-$th diagonal element.  $\mathcal{B}\left(\bA,\ell,j\right)$ refers to the $\ell j-$th block of the block matrix $\bA$. Finally, the symbol $\delta_{j\ell}$ is the Kronecker delta such that $\delta_{j\ell}=1 $ if $\ell =j$, and $0$ otherwise.}. The system model, channel estimation and the detection schemes for the UL are introduced in the next Section. Then, Section III provides the theoretical analysis of the rate for each combining technique. Finally, Section IV provides a selection of numerical results that validate our theoretical findings and shed light on some interesting aspects and conclusions are drawn in Section V.

%\newpage
\section{System Model}
 We consider the UL of a TDD multi-cell multi-user system with $L$ cells having each $K$ single-antenna users communicating with an $N-$antennas BS. We denote $\bH_{j,\ell} \in \mathbb{C}^{N \times K}$ as the aggregate channel matrix between the UEs in cell $\ell$ and BS$_j$. {Throughout the paper, we focus on the performances of cell $j$}. Assuming Gaussian codebooks, we denote by  $\sqrt{{p}} \bx_j \sim\mathcal{CN}(0,p \bI_K)$  the vector of the transmitted data symbols sent by all the UEs in cell $j$ with the average power ${p}$. In this line, the received signal at BS$_j$ is given by : 
\begin{equation}
{\bf y}_j=\sqrt{{p}} \bH_{jj} \bx_j + \sqrt{{p}} \displaystyle{\sum_{\substack{\ell=1 \\ \ell \neq j}}^{L}} \bH_{j\ell} \bx_\ell + {\bf n}_j,
\end{equation}       
where ${\bf n}_j$ represents a zero-mean additive Gaussian noise with variance $\sigma^2\bI_N$. 
  \subsection{ Channel Model}
 We assume that the system operates over a block-flat fading channel with coherence time $T_c$. 
Let $\bh_{j\ell k}$ represent the channel linking the $k-$th UE in cell $\ell$ to BS$_j$. We consider correlated Rician fading for intra-cell or local channels and correlated Rayleigh fading for channels from other cells. { This is a reasonable setting for inter-cell channels, owing to the longer distances between UEs and the BSs in other cells, that would likely include scatterers and thus significantly reduce the possibility of a Line-of-Sight transmission.} Specifically, $\bh_{j\ell k}\in \mathbb{C}^{N\times 1}$ is given by:
\begin{align}
\bh_{jjk}= & \sqrt{\beta_{jjk}} \left(\sqrt{\frac{1}{1+\kappa_{jk}}}\boldsymbol{\Theta}_{jjk}^{\frac{1}{2}}\bz_{jjk} + \sqrt{\frac{\kappa_{jk}}{1+\kappa_{jk}}}\overline{\bz}_{jk}\right), \\
\bh_{j\ell k}= & \sqrt{\beta_{j\ell k}} \boldsymbol{\Theta}_{j\ell k}^{\frac{1}{2}}\bz_{j\ell k}, \quad \ell \neq j,
 \end{align}
where $\beta_{j\ell k}$  represents the large-scale fading and the remaining terms account for the small-scale fading. Moreover, the channel model includes a Rayleigh-distributed component $\bz_{j\ell k} \sim \mathcal{CN}\left(0,\bI_N\right)$ to account for the scattered signals. Plus, for the intra-cell channels, i.e. $\ell=j,$ a deterministic component $\overline{\bz}_{j k}\in\mathbb{C}^{N}$ is considered to represent the specular (LoS) signals. The Rician factor $\kappa_{jk} \geq 0$ denotes the ratio between the specular and scattered components.  Note that, for each UE $k$, we consider a different Rician factor $\kappa_{j k}$, due to the distinct geographic locations of the UEs, and a different channel correlation matrix $\boldsymbol{\Theta}_{j\ell k}$, as well. Throughout the paper, $\forall (j,\ell, k)$, $\boldsymbol{\Theta}_{j \ell k}$ is assumed to be slowly varying compared to the channel coherence time and thus is supposed to be  perfectly known to the BS. For notational convenience, let :
\begin{align}
{\bf R}_{j \ell k} ={\frac{\beta_{j \ell k} }{1+\kappa_{jk}\delta_{j\ell}}}\boldsymbol{\Theta}_{j \ell k}, 
\end{align}
and the aggregate matrix of the LoS components in cell $j$:  $\overline{\bH}_{j}=\left[\overline{\bh}_{j1} \overline{\bh}_{j2} \dots \overline{\bh}_{jK}\right]$ with: 
\begin{align}
\overline{\bh}_{jk} =\sqrt{\frac{\beta_{jjk}\kappa_{jk}}{1+\kappa_{jk}}}\ \overline{\bz}_{jk}.
\end{align}
 \subsection{Channel Estimation}
During a dedicated uplink training phase of $\tau$ symbols, the UEs in each cell transmit mutually orthogonal pilot sequences which allow the BSs to compute estimates of the channels. As shall be discussed next, depending on the type of receiver in use, BS$_j$ either solely estimates its local channels $\bh_{jjk}$, $\forall k$, or all the channels linking it to UEs,\textit{ i.e.} $\bh_{j\ell k}$, $\forall (\ell,k)$.  %\textbf{Why ${\bf y}^{tr}_{jk}$ ?}
In addition, since it is a multi-cell system, we assume that the same set of pilot sequences is reused in every cell so that the channel estimates are corrupted by pilot contamination from adjacent cells. In fact, the same pilot is assigned to every $k$-th UE in each cell, and as such $\forall (j,k)$, the estimates of the channels $\bh_{j 1 k}, \bh_{j2k},\dots,\bh_{jLk}$, will be correlated. Accordingly, using the MMSE estimation, the estimate of $\bh_{j\ell k}$ is given by\cite{book-Kay97,Power-Zhang2014}:  
 \begin{equation}
{\wbh}_{j\ell k}= {\bf R}_{j\ell k } \bPhi_{jk} \left( \sum_{\substack{\ell'=1 }}^{L} \bh_{j \ell' k } + \frac{1}{\sqrt{\tau \rho_{tr}}} {\bf n}_{jk}^{tr}\right)+ \delta_{j \ell} \overline{\bh}_{jk},
\label{eq:h_hat M-MMSE}
\end{equation}
where
%\begin{equation}
$\bPhi_{jk} =\left(\sum_{\ell' =1}^{L}{\bf R}_{j \ell' k} + \frac{1}{\tau \rho_{tr}}\bI_N\right)^{-1}$ ,
%\end{equation}
with $\rho_{tr} ={\dfrac{p}{\sigma^2}}$ and ${\bf n}_{jk}^{tr}$ are, respectively, the SNR and noise during training. In fact, the higher value $\rho_{tr}$ takes, the better quality of channel estimation becomes. Therefore, it can be easily shown that ${\wbh}_{j\ell k}$ $\sim \mathcal{CN}\left(\delta_{j \ell} \overline{\bh}_{jk}, \tilde{\bR}_{j\ell k}\right)$, with $\tilde{\bf R}_{j\ell k}={\bf R}_{j\ell k} \bPhi_{j k}{\bf R}_{j\ell k}$. Plus, due to MMSE estimation orthogonality, the estimation error $\boldsymbol{\xi}_{j\ell k}=\bh_{j\ell k}-\wbh_{j\ell k} $, follows the distribution $\boldsymbol{\xi}_{j\ell k}\sim \mathcal{CN}\left(0,\bR_{j\ell k}-\tilde{\bR}_{j\ell k}\right)$. 
%{\color{red} In MRC and S-MMSE only local channels are estimated and used to process the received signals.}

\subsection{Detection and Uplink Rates}
In order to retrieve the signal sent by UE $k$, BS$_j$ uses a linear receiver $\bg_{jk}\in \mathbb{C}^{N\times 1}$ to process the received signal ${\bf y}_j$ by generating ${r}_{jk}={\bg}_{jk}\herm{\bf y}_j$. Therefore:
\begin{align}
&{r}_{jk}=  \overbrace{\sqrt{{p}} \bg\herm_{jk} \wbh_{jjk} x_{jk}}^{\text{signal}} + \overbrace{\sqrt{{p}}  \bg\herm_{j k} \boldsymbol{\xi}_{jk} x_{jk} }^{\text{estimation error}} \notag \\ & 
+ \underbrace{\sqrt{{p}}{\sum_{\ell=1}^{L}\sum_{\substack{i=1\\ i \neq k}}^{K} \bg\herm_{j k} \bh_{j\ell i} x_{\ell i}}}_{\text {intra+inter-cell interference}} + \underbrace{\sqrt{{p}}{\sum_{\substack{\ell=1 \\ \ell\neq j}}^{L} \bg\herm_{j k} \bh_{j\ell k} x_{\ell k}}}_{\text{pilot contamination}}+  \underbrace{\bg\herm_{jk} n_{jk}}_{\text{processed noise}}.%\bg\herm_{jk} n_{jk}
\label{eq:decomposed r}
\end{align}
%This expression respectively, separates the signal, channel estimation error, intra and inter-cell interference, pilot contamination and noise terms.
Let $\rho_d = \frac{p}{\sigma^2}$, denote the SNR during data transmission. Therefore, the SINR corresponding to the transmission between the $k$-th UE in cell $j$ and its BS, $\gamma_{jk}$, is defined as:
{\begin{align}\label{eq:gamma def}
&\gamma_{jk}=\notag \\ 
&\frac{\vert{\bf g}_{jk}\herm\wbh_{jjk}\vert^2}
{\mathbb{E}\left[{\bf g}_{jk}\herm\left(\displaystyle{\sum_{\substack{(\ell,i)\neq (j, k)}} \bh_{j\ell i}\bh_{j\ell i}\herm } + \left.\boldsymbol{\xi}_{jk}\boldsymbol{\xi}_{jk}\herm+\frac{1}{\rho_d}\bI_N\right){\bf g}_{jk}\right|\wbH_{j}\right]}.
\end{align}}
Additionally, the achievable UL rate at BS$_j$ is defined as: 
\begin{align}\label{eq:sum-rate}
{R}_j=\left(1-\frac{\tau}{T_c}\right) \frac{1}{K} \sum_{k=1}^{K}\mathbb{E} \left[\log\left(1+\gamma_{jk}\right)\right].
\end{align}

\subsubsection*{Signal Detection} 
We propose to investigate three designs for $\bg_{jk}$. First, we consider the case wherein the BS uses only its legitimate channels to perform the signal detection. In this line, two linear receivers are of practical interest in massive MIMO systems, namely the matched filter or MRC and the MMSE detector. Previous works have shown that these receivers present asymptotically the same performance over Rayleigh fading channels. It is thus of interest to investigate whether this result still holds true in the presence of LoS components.  Since only local channels are used,  single-cell channel estimation is sufficient to build the detectors. In this context, note that we shall refer to MMSE detection in single-cell estimation by ``S-MMSE''. Hence \footnote{For sake of clarity, in the sequel, the superscripts $(.\MRC)$, $(.\LMMSE)$ and $(.\MMMSE)$ will be added to denote the corresponding quantity when using the MRC, S-MMSE and M-MMSE receivers, respectively.}:   
\begin{align}
&{\bg}\MRC_{jk} = \wbh_{jjk}, \label{eq:G_MRC_hat}\\
&{\bg}\LMMSE_{jk} =  \left(\sum_{i=1}^{K}\wbh_{jji}\wbhh_{jji} +(\bZs_{j})^{-1} \right)^{-1}\wbh_{jjk}.\label{eq:G_MMSE_hat}
\end{align}
with $(\bZs_j)^{-1}= \frac{1}{\rho_d} \bI_N+ \bA$, where $\bA$ $\in\mathbb{C}^{N\times N}$ is an arbitrary hermitian positive semi-definite design parameter. For instance, it could contain the covariances of estimation errors and inter-cell interference as in \cite{HowMany-Jacob2013}, \textit{i.e.}, we can put:  $
\bA= \sum_{i=1}^{K}({\bf R}_{jji}-\tilde{\bf R}_{jji})+\sum_{\substack{\ell=1 \\ \ell\neq j}}^{L}\sum_{i=1}^{K}{\bf R}_{j \ell i}$.
\par Second, we consider the setting where the BS further exploits the observation received during training in order to estimate the interfering channels from other cells, and ultimately, uses them to design the MMSE detector. {The authors in \cite{Massive-Emil2018} } refer to this technique as multi-cell-MMSE (M-MMSE) combining to differentiate from the above mentioned S-MMSE detector \eqref{eq:G_MMSE_hat} obtained thorough single-cell estimation. Using M-MMSE implies that BS$_j$ computes all the channel estimates $\wbh_{j\ell k}$, $\forall (\ell,k)$. Accordingly, based on the independence between the estimates and their associated estimation errors $\boldsymbol{\xi}_{j\ell k}$, the SINR expression \eqref{eq:gamma def} can be simply rewritten as: 
{%\small 
\begin{align}
\gamma\MMMSE_{jk}= \frac{\vert{\bf g}_{jk}\herm\wbh_{jjk}\vert^2}
{{\bf g}_{jk}\herm\left(\displaystyle{\sum_{\substack{(\ell,i)\neq (j, k)}} \wbh_{j\ell i}\wbh_{j\ell i}\herm } + (\bZm_j)^{-1} \right){\bf g}_{jk}},
\label{eq:SINR M-MMSE}
\end{align}}
with :  $(\bZm_j)^{-1}= \frac{1}{\rho_d} \bI_N+ \displaystyle{\sum_{\substack{\ell = 1}}^{L}\sum_{i=1}^K}({\bf R}_{j\ell i}-\tilde{\bf R}_{j\ell i})$. In fact M-MMSE is defined as the optimal receiver that maximizes the UL SINR in multi-cell estimation: 
\begin{align}
{\bg}\MMMSE_{jk}= \underset{\bg_{jk}} {\rm{argmax}} \  \gamma\MMMSE_{jk}. \label{eq:G_MMMSE_hat}
\end{align}
Clearly, finding \eqref{eq:G_MMMSE_hat} amounts to maximizing the Rayleigh-Quotient, whose well-known solution can be:
\begin{equation}\label{eq:g_MMMSE_hat}
{\bg}\MMMSE_{jk} =  \left(\sum_{\ell=1}^{L}\sum_{i=1}^{K}\wbh_{j\ell i}\wbhh_{j\ell i} +( \bZm_{j})^{-1} \right)^{-1} \wbh_{jjk}.
\end{equation}
Plugging $\bg\MMMSE_{jk}$ \eqref{eq:g_MMMSE_hat} in the SINR expression $\gamma\MMMSE_{jk}$, thus yields: 
\begin{align}\label{eq:gamma M-MMSE}
\hspace*{-0.5em}\gamma\MMMSE_{jk} = \wbhh_{jjk}\left(\displaystyle{\sum_{\substack{(\ell,i)\neq (j, k)}}}\wbh_{j\ell i}\wbhh_{j\ell i}+(\bZm_j)^{-1}\right)^{-1}\wbh_{jjk}.\hspace*{-0.3em}
\end{align} 

\section{Analysis of Achievable UL Rates}\label{sec:Analysis}
We carry out a theoretical analysis of the UL performances when using MRC, S-MMSE and M-MMSE detectors under the assumption of correlated Rician channels with imperfect CSI. To this end, we first derive closed-form expressions for the achievable uplink rates in the large-antenna limit. The obtained approximations are tractable, and as such, enable to apprehend { the trends of the UL performances with respect to the different system parameters such as the Rician factors or the correlation matrices. Additionally,  they allow to identify the factors that limit the performances in terms of spectral efficiency. Note that our results are tight and apply for systems with  different structures of  channel correlation matrices, as shall be illustrated with simulations.} Our main results are stated in three theorems treating each a different receiver. In every theorem, we provide approximations of the SINR $\gamma_{jk}$ which, according to the continuous mapping theorem\cite{book-Prob95}, yields an approximation of the ergodic achievable rate. 

As $N\rightarrow \infty$, while $K$ is maintained fixed, we consider the following assumption:
\begin{assumption}\label{ass:asymptotic}
$\forall j,\ \ell, \ k$, $\limsup_{N} \Vert\bR_{j\ell k}\Vert_2 < \infty$, and $\liminf_N \frac{1}{N}\tr(\bR_{j\ell k}) > 0$.
\end{assumption}
%In order to impose the presence of the LoS
In the sequel, we shall refer to this regime as $N \rightarrow \infty$.
\subsection{MRC}

\begin{theorem}\label{th:MRC} Under Assumption \ref{ass:asymptotic}, $\gamma_{jk}\MRC - \overline{\gamma}_{jk}\MRC \asto 0$, with: 
%Specifically when $N \rightarrow \infty$ with fixed $K$ : 
\begin{align}
&\overline{\gamma}_{jk}\MRC = \frac{ \left(\frac{1}{N}\tr\tilde{\bf R}_{jjk}+\frac{1}{N}\Vert\overline{\bh}_{jk}\Vert^2\right)^2}
{\underbrace{\frac{1}{N^2} \displaystyle{\sum_{\substack{i=1\\ i\neq k}}^{K}}\vert\overline{\bh}\herm_{jk}\overline{\bh}_{ji}\vert^2}_\text{LoS Intra-cell interference }+ \underbrace{\sum_{\substack{\ell=1 \\ \ell\neq j}}^{L}\left(\frac{1}{N}\tr(\bR_{jjk}\bPhi_{jk}{\bR}_{j\ell k})\right)^2}_{\text{pilot contamination}}}. \label{eq:gamma MRC DE}
\end{align}
\end{theorem}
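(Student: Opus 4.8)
The plan is to exploit the scale-invariance of the SINR in $\bg_{jk}$: since $\bg_{jk}\MRC=\wbh_{jjk}$, both the numerator and the denominator of $\gamma_{jk}\MRC$ can be divided by $N^2$, after which every term becomes a normalized quadratic form to which the law of large numbers (trace lemma) applies. Under Assumption~\ref{ass:asymptotic} the matrices $\bR_{j\ell k}$ (hence $\tilde{\bR}_{j\ell k}$, $\bPhi_{jk}$) have uniformly bounded spectral norm and traces growing linearly in $N$, which is precisely what guarantees that these quadratic forms concentrate around their means and that the resulting deterministic equivalents are well defined. The proof then reduces to computing the a.s. limit of the normalized numerator, identifying which normalized denominator terms survive the $1/N^2$ scaling, and invoking the continuous mapping theorem to pass from the SINR to the rate.

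For the numerator I would write $\wbh_{jjk}=\overline{\bh}_{jk}+\tilde{\bh}_{jjk}$, where $\tilde{\bh}_{jjk}\sim\mathcal{CN}(0,\tilde{\bR}_{jjk})$ is the zero-mean fluctuation, and expand $\frac{1}{N}\|\wbh_{jjk}\|^2$. The trace lemma gives $\frac{1}{N}\|\tilde{\bh}_{jjk}\|^2\to\frac{1}{N}\tr\tilde{\bR}_{jjk}$ a.s., while the cross term $\frac{1}{N}\overline{\bh}_{jk}\herm\tilde{\bh}_{jjk}$ is zero-mean with variance $O(1/N)$ and hence vanishes. Thus $\frac{1}{N}\|\wbh_{jjk}\|^2-(\frac{1}{N}\tr\tilde{\bR}_{jjk}+\frac{1}{N}\|\overline{\bh}_{jk}\|^2)\to 0$, and squaring yields the numerator of \eqref{eq:gamma MRC DE}.

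For the denominator I would split the conditional expectation in \eqref{eq:gamma def} using MMSE orthogonality ($\wbh_{j\ell k}\perp\boldsymbol{\xi}_{j\ell k}$). The estimation-error and noise terms reduce to $\wbh_{jjk}\herm(\bR_{jjk}-\tilde{\bR}_{jjk})\wbh_{jjk}$ and $\frac{1}{\rho_d}\|\wbh_{jjk}\|^2$, both $O(N)$, hence negligible after the $1/N^2$ normalization. For the interference indices $i\neq k$ the channels $\bh_{j\ell i}$ use orthogonal pilots and are independent of $\wbh_{jjk}$, so conditioning gives $\wbh_{jjk}\herm\,\mathbb{E}[\bh_{j\ell i}\bh_{j\ell i}\herm]\,\wbh_{jjk}$. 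For $\ell\neq j$ this mean is the pure-Rayleigh $\bR_{j\ell i}$ and the term is $O(N)$, so it dies; for $\ell=j$ the mean is $\bR_{jji}+\overline{\bh}_{ji}\overline{\bh}_{ji}\herm$, and only the rank-one LoS part survives the scaling, giving $\frac{1}{N^2}|\wbh_{jjk}\herm\overline{\bh}_{ji}|^2\to\frac{1}{N^2}|\overline{\bh}_{jk}\herm\overline{\bh}_{ji}|^2$ (the fluctuation cross terms again vanish). This produces the LoS intra-cell interference.

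The delicate part, and the one I expect to be the main obstacle, is the pilot-contamination contribution from the indices $(\ell,k)$ with $\ell\neq j$, where $\bh_{j\ell k}$ and $\wbh_{jjk}$ are \emph{correlated}: every estimate for user $k$ is built from the same zero-mean training observation $\bq_{jk}$ (of covariance $\bPhi_{jk}^{-1}$) through $\wbh_{j\ell k}-\delta_{j\ell}\overline{\bh}_{jk}=\bR_{j\ell k}\bPhi_{jk}\bq_{jk}$, so the independence shortcut used above no longer applies. Here I would write $\bh_{j\ell k}=\wbh_{j\ell k}+\boldsymbol{\xi}_{j\ell k}$ and note that $\wbh_{j\ell k}$ is, conditionally on $\wbH_j$, a deterministic function of $\wbh_{jjk}$, while $\boldsymbol{\xi}_{j\ell k}$ is independent with covariance $\bR_{j\ell k}-\tilde{\bR}_{j\ell k}$. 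The error part then contributes $O(N)$ and dies, whereas $\frac{1}{N}\wbh_{jjk}\herm\wbh_{j\ell k}$ is a normalized quadratic form in $\bq_{jk}$ whose a.s. limit equals its mean, the cross-covariance trace $\frac{1}{N}\tr(\bR_{jjk}\bPhi_{jk}\bR_{j\ell k})$; crucially the LoS cross terms drop because $\bq_{jk}$ is zero-mean. Squaring and summing over $\ell\neq j$ gives the pilot-contamination denominator. Assembling the surviving numerator and denominator limits and applying the continuous mapping theorem then yields $\gamma_{jk}\MRC-\overline{\gamma}_{jk}\MRC\asto 0$.
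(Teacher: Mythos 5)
Your proposal is correct and follows essentially the same route as the paper's own (sketched) proof: normalize by $1/N^2$, apply the law of large numbers to the quadratic forms so that only the LoS inner products and the pilot-contamination cross-covariance traces $\frac{1}{N}\tr(\bR_{jjk}\bPhi_{jk}\bR_{j\ell k})$ survive, and conclude via the continuous mapping theorem. The paper's Appendix only details the S-MMSE case (where the Woodbury identity is needed) and notes that the MRC case uses the same arguments; your direct expansion, including the careful treatment of the correlation between $\wbh_{jjk}$ and the pilot-sharing channels through the common training observation, is precisely that specialization.
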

\begin{proof}
A sketch of the proof is given in Appendix \ref{app:MRC and S-MMSE sketch}
\end{proof}

First, note that if only Rayleigh fading was considered, \textit{i.e.} $\kappa_{jk}=0$,$\forall(j,k)$, the approximation $\overline{\gamma}\MRC_{jk}$ coincides with the findings in \cite{noncooperative-Marzetta2010, HowMany-Jacob2013}. Second, these works state that massive MIMO systems using MRC detection are only limited by pilot contamination. In fact, regarding inter-cell interference, this is still true in Rician fading, since we can see from $\overline{\gamma}\MRC_{jk}$, that only pilot contamination remains out of the total interference inflicted by other cells. Nevertheless, in contrast to the Rayleigh fading, \eqref{eq:gamma MRC DE} reveals that in Rician fading, the system undergoes LoS induced intra-cell interference as well. This latter is characterized by the inner product  $\frac{1}{N}\overline{\bh}\herm_{jk}\overline{\bh}_{ji}$, $i\neq k$ and, thus would dissipate under asymptotic favorable propagation conditions wherein: $\frac{1}{N}\overline{\bh}\herm_{jk}\overline{\bh}_{ji}\asto 0$, $\forall i\neq k$. Therefore, in such environments, better performances are attained:
 \begin{cor}[$\gamma\MRC$ under favorable propagation]\label{cor:MRC fav} Under Assumption \ref{ass:asymptotic}, if $\frac{1}{N}\overline{\bh}\herm_{jk}\overline{\bh}_{ji}\asto 0$, $\forall i\neq k$: 
\begin{align}
&\overline{\gamma}_{jk}\MRC = \frac{ \left(\frac{1}{N}\tr\tilde{\bf R}_{jjk}+\frac{1}{N}\Vert\overline{\bh}_{jk}\Vert^2\right)^2}
{ \sum
_{\substack{\ell=1 \\ \ell\neq j}}^{L}\left(\frac{1}{N}\tr(\bR_{jjk}\bPhi_{jk}{\bR}_{j\ell k})\right)^2}. \label{eq:gamma MRC DE fav}
\end{align}
 \end{cor}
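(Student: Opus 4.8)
The plan is to obtain the corollary as an immediate consequence of Theorem~\ref{th:MRC}, the only change being the disappearance of the LoS intra-cell interference term in the denominator of \eqref{eq:gamma MRC DE}. First I would rewrite that term as a finite sum of squared normalized inner products,
\begin{equation}
\frac{1}{N^2}\sum_{\substack{i=1\\ i\neq k}}^{K}\vert\overline{\bh}\herm_{jk}\overline{\bh}_{ji}\vert^2 = \sum_{\substack{i=1\\ i\neq k}}^{K}\left\vert\frac{1}{N}\overline{\bh}\herm_{jk}\overline{\bh}_{ji}\right\vert^2,
\end{equation}
and then invoke the favorable-propagation hypothesis $\frac{1}{N}\overline{\bh}\herm_{jk}\overline{\bh}_{ji}\asto 0$ for each $i\neq k$. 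Since squaring is continuous and the index set is finite ($K$ is held fixed while $N\to\infty$), the whole sum converges almost surely to $0$.

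The second step is to verify that deleting this vanishing summand from the denominator does not disturb the almost-sure convergence of $\gamma\MRC_{jk}$. Writing $A_N=\frac{1}{N}\tr\tilde{\bR}_{jjk}+\frac{1}{N}\Vert\overline{\bh}_{jk}\Vert^2$ for the common numerator factor, $B_N$ for the LoS term just shown to vanish, and $C_N=\sum_{\ell\neq j}\bigl(\frac{1}{N}\tr(\bR_{jjk}\bPhi_{jk}\bR_{j\ell k})\bigr)^2$ for the pilot-contamination term, Theorem~\ref{th:MRC} gives $\gamma\MRC_{jk}-A_N^2/(B_N+C_N)\asto 0$, whereas \eqref{eq:gamma MRC DE fav} is $A_N^2/C_N$. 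Under Assumption~\ref{ass:asymptotic} the trace terms are $O(1)$, so $A_N$ is bounded, and $C_N$ is bounded away from zero; hence the map $(x,y,z)\mapsto x/(y+z)$ is continuous at the limit point of $(A_N^2,B_N,C_N)$, whose middle coordinate is $0$ and whose last coordinate is strictly positive, and the continuous mapping theorem \cite{book-Prob95} yields $A_N^2/(B_N+C_N)-A_N^2/C_N\asto 0$. Combining this with Theorem~\ref{th:MRC} through the triangle inequality gives $\gamma\MRC_{jk}-A_N^2/C_N\asto 0$, which is exactly \eqref{eq:gamma MRC DE fav}.

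The main obstacle I anticipate lies entirely in the second step, namely establishing that the surviving pilot-contamination denominator $C_N$ is bounded away from zero uniformly in $N$. Unlike $B_N$, this term must not collapse, otherwise removing $B_N$ could inflate the ratio in the limit; a strictly positive $\liminf_N C_N$ is precisely what guarantees the ratio map is continuous at the relevant point. This positivity should follow from Assumption~\ref{ass:asymptotic} (in particular $\liminf_N \frac{1}{N}\tr(\bR_{j\ell k})>0$ together with the boundedness of $\Vert\bR_{j\ell k}\Vert_2$) and the explicit form of $\bPhi_{jk}$, using that pilots are reused across the $L>1$ cells so that each cross term $\frac{1}{N}\tr(\bR_{jjk}\bPhi_{jk}\bR_{j\ell k})$ is strictly positive. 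Everything else is mechanical and simply reuses the asymptotic tools already deployed in the proof of Theorem~\ref{th:MRC}.
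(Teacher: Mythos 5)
Your overall route is the same as the paper's: Corollary~\ref{cor:MRC fav} is read off from Theorem~\ref{th:MRC} by deleting the LoS intra-cell interference term, which vanishes under the favorable-propagation hypothesis (the paper makes this transition in a single sentence, with no separate proof). Your first step and the continuous-mapping argument for removing a vanishing summand from the denominator are correct, and they make explicit a point the paper glosses over: the asymptotic equivalence is only preserved if the surviving denominator stays bounded away from zero.

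However, the way you propose to discharge that very point is where the argument breaks. The positivity $\liminf_N C_N>0$, with $C_N=\sum_{\ell\neq j}\bigl(\frac{1}{N}\tr(\bR_{jjk}\bPhi_{jk}\bR_{j\ell k})\bigr)^2$, does \emph{not} follow from Assumption~\ref{ass:asymptotic} and the explicit form of $\bPhi_{jk}$. Take $L=2$, $N$ even, $\bR_{jjk}=2\diag(\bI_{N/2},\boldsymbol{0}_{N/2})$ and $\bR_{j\ell k}=2\diag(\boldsymbol{0}_{N/2},\bI_{N/2})$: both matrices have spectral norm $2$ and normalized trace $1$, so Assumption~\ref{ass:asymptotic} holds, yet $\bPhi_{jk}$ is diagonal and $\bR_{jjk}\bPhi_{jk}\bR_{j\ell k}=\boldsymbol{0}_N$, hence $C_N\equiv 0$ for every $N$. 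This is precisely the known phenomenon that pilot contamination disappears when the contaminating users' correlation matrices have orthogonal supports; "pilots are reused across cells" therefore does not make the cross terms strictly positive. The positivity of $C_N$ must be imposed as an additional implicit regularity condition rather than derived --- a condition the paper itself also assumes silently, since the statement of Theorem~\ref{th:MRC} is equally vacuous when its denominator vanishes. A milder version of the same remark applies to your claim that $A_N$ is bounded "under Assumption~\ref{ass:asymptotic}": the assumption constrains only the $\bR_{j\ell k}$, not $\frac{1}{N}\Vert\overline{\bh}_{jk}\Vert^2$, whose boundedness is again implicit in the model (steering-vector LoS with bounded $\beta_{jjk}$). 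With these two conditions stated as hypotheses instead of claimed as consequences, your proof is complete and matches the paper's intent.
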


\subsection{S-MMSE}

\begin{theorem}[S-MMSE]\label{th:S-MMSE}
 Under Assumption \ref{ass:asymptotic}, we have : $ \overline{\gamma}\LMMSE	_{jk}-{\gamma}\LMMSE	_{jk} \asto 0 $,  such that:
 
{%\color{blue}
\begin{align}
&\overline{\gamma}\LMMSE_{jk} = 
&\frac{1}{
\underbrace{\displaystyle{\sum_{\substack{\ell=1 \\ \ell \neq j}}^{L}}\left|\left[\bQ_j\right]_{kk}\beta_{jk,\ell j}^{\mbox{\tiny S}}\right|^2}_{\text{induced by pilot contamination}}+\underbrace{\displaystyle{\sum_{\substack{\ell=1 \\ \ell \neq j}}^{L}\sum_{\substack{i=1\\ i\neq k}}^{K}}\left|\left[\bQ_j\right]_{ki}\beta_{ji,\ell j}^{\mbox{\tiny S}}\right|^2}_{\text{uncorrelated inter-cell interference}}}. \label{eq:gamma MMSE DE simplified}
\end{align}}
with: $\beta_{jk, \ell j}^{\mbox{\tiny S}}= \frac{1}{N}\tr(\bR_{j\ell k}\bPhi_{jk}{\bR}_{jj k}\bZs_j)$, and
\begin{align}
\bQ_j = \left(\frac{1}{N}\overline{\bH}_j\herm\bZs_j\overline{\bH}_j+ diag\left\{ \beta_{ji,jj}^{\mbox{\tiny S}}\right\}_{i=1}^{K}\right)^{-1}. \label{eq:Q_hat} 
\end{align}
\end{theorem}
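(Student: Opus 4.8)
The plan is to exploit that $K$ stays fixed while $N\to\infty$, which collapses the $N\times N$ resolvent defining the S-MMSE receiver into a problem on a fixed $K\times K$ matrix. Write $\bB_j=(\bZs_j)^{-1}+\wbH_j\wbH_j\herm$, so that $\bg\LMMSE_{jk}=\bB_j^{-1}\wbh_{jjk}$, and denote by $\bT_{jk}=\mathbb{E}\big[\sum_{(\ell,i)\neq(j,k)}\bh_{j\ell i}\bh_{j\ell i}\herm+\bxi_{jk}\bxi_{jk}\herm+\tfrac{1}{\rho_d}\bI_N\,\big|\,\wbH_j\big]$ the conditional interference-plus-noise covariance, so that $\gamma\LMMSE_{jk}=|\wbhh_{jjk}\bB_j^{-1}\wbh_{jjk}|^2/\big((\bg\LMMSE_{jk})\herm\bT_{jk}\bg\LMMSE_{jk}\big)$. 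First I would evaluate $\bT_{jk}$: by the joint Gaussianity of the channels and their MMSE estimates induced by pilot reuse, the conditional second moments collapse to $\bT_{jk}=\sum_{(\ell,i)\neq(j,k)}\wbh_{j\ell i}\wbhh_{j\ell i}+\bD_{jk}$, where $\bD_{jk}=\sum_{(\ell,i)}(\bR_{j\ell i}-\tilde{\bR}_{j\ell i})+\tfrac{1}{\rho_d}\bI_N$ gathers the deterministic error and noise covariances. Applying the Woodbury identity, $\bB_j^{-1}\wbH_j=\bZs_j\wbH_j(\bI_K+\bG_j)^{-1}$ with the Gram matrix $\bG_j=\wbH_j\herm\bZs_j\wbH_j$, and the SINR becomes purely $K\times K$,
\begin{equation*}
\gamma\LMMSE_{jk}=\frac{\big([\bG_j(\bI_K+\bG_j)^{-1}]_{kk}\big)^2}{\big[(\bI_K+\bG_j)^{-1}\bC_{jk}(\bI_K+\bG_j)^{-1}\big]_{kk}},\quad \bC_{jk}=\wbH_j\herm\bZs_j\bT_{jk}\bZs_j\wbH_j.
\end{equation*}

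Next I would compute entrywise deterministic equivalents of $\bG_j$ and $\bC_{jk}$ via the trace lemma, the LLN for Gaussian quadratic forms. Splitting each local estimate into its LoS mean $\overline{\bh}_{ji}$ and a zero-mean Gaussian part of covariance $\tilde{\bR}_{jji}$, and using that estimates with distinct user indices are independent while the LoS--random cross terms are $O(N^{-1/2})$, I obtain $\tfrac{1}{N}\bG_j\to\bQ_j^{-1}$ entrywise, its $(k,i)$ entry tending to $\tfrac{1}{N}\overline{\bh}\herm_{jk}\bZs_j\overline{\bh}_{ji}+\delta_{ki}\beta_{jk,jj}^{\mbox{\tiny S}}$. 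Since $K$ is fixed this gives $N(\bI_K+\bG_j)^{-1}\to\bQ_j$, and because every eigenvalue of $\bG_j$ grows like $N$ we get $[\bG_j(\bI_K+\bG_j)^{-1}]_{kk}\to1$; the numerator therefore tends to $1$ and $\gamma\LMMSE_{jk}$ reduces to the reciprocal of the denominator limit.

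It then remains to identify $\lim\tfrac{1}{N^2}[\bQ_j\bC_{jk}\bQ_j]_{kk}$. The $\bD_{jk}$ part of $\bT_{jk}$ is $O(N)$ inside the quadratic form and is killed by the $N^{-2}$ scaling, so only the rank-one estimate terms $\wbh_{j\ell i}\wbhh_{j\ell i}$ survive. The crux is each factor $\tfrac{1}{N}\wbhh_{jja}\bZs_j\wbh_{j\ell i}$: when $a=i$ both estimates are linear in the common training vector $\bp_{ji}$ and the factor converges to $\beta_{ja,\ell j}^{\mbox{\tiny S}}$ for $\ell\neq j$ and to $[\bQ_j^{-1}]_{ai}$ for $\ell=j$, whereas when $a\neq i$ the two estimates are built from independent observations and, the inter-cell estimates being zero-mean, the factor vanishes. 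Summing over the three interferer classes, the intra-cell block ($\ell=j$) reduces, after the outer $\bQ_j$ factors, to $\bI_K$ minus the projector onto the $k$th coordinate, whose $(k,k)$ entry is $1-1=0$ (the known intra-cell interference being nulled by the combiner); the pilot-contamination block ($\ell\neq j$, $i=k$) contributes $\sum_{\ell\neq j}|[\bQ_j]_{kk}\beta_{jk,\ell j}^{\mbox{\tiny S}}|^2$; and the residual inter-cell block ($\ell\neq j$, $i\neq k$) contributes $\sum_{\ell\neq j}\sum_{i\neq k}|[\bQ_j]_{ki}\beta_{ji,\ell j}^{\mbox{\tiny S}}|^2$. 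This is exactly $1/\overline{\gamma}\LMMSE_{jk}$, and the claim $\gamma\LMMSE_{jk}-\overline{\gamma}\LMMSE_{jk}\asto 0$ follows by the continuous mapping theorem.

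I expect the main obstacle to be precisely this factor $\wbhh_{jja}\bZs_j\wbh_{j\ell i}$, which couples the local estimate steering the receiver to the interfering estimate hidden inside $\bT_{jk}$: pilot reuse makes the two statistically dependent, so the trace lemma does not apply off the shelf and one must re-express every estimate through its underlying training observation $\bp_{ji}$ and separate the same-pilot ($a=i$) from the independent ($a\neq i$) contributions. A secondary technical point is to verify rigorously that the deterministic $\bD_{jk}$ terms and all LoS--random cross terms are genuinely of lower order, so that only the $\beta^{\mbox{\tiny S}}$ coefficients and the LoS Gram matrix $\tfrac{1}{N}\overline{\bH}_j\herm\bZs_j\overline{\bH}_j$ entering $\bQ_j$ persist in the limit.
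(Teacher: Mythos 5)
Your proposal is correct and takes essentially the same route as the paper's own (sketched) proof: a Woodbury reduction of the S-MMSE quantities to fixed-size $K\times K$ objects, LLN limits of the Gaussian quadratic forms giving $\frac{1}{N}\bG_j \to \bQ_j^{-1}$ (the paper's statement $[\tilde{\bQ}_j]_{ik}-[\bQ_j]_{ik}\asto 0$), the same-pilot correlation coefficients $\beta^{\mbox{\tiny S}}_{ji,\ell j}$ for the inter-cell terms, and the continuous mapping theorem to assemble the deterministic equivalents. The only wording slip is your claim that $\frac{1}{N}\wbhh_{jja}\bZs_j\wbh_{j\ell i}$ vanishes whenever $a\neq i$: this holds only for $\ell\neq j$ (zero-mean inter-cell estimates), while for $\ell=j$ it tends to the LoS cross term $[\bQ_j^{-1}]_{ai}$ --- which is in fact exactly the limit your intra-cell-block computation ($\bI_K$ minus the $k$th coordinate projector, hence a vanishing $(k,k)$ entry) correctly relies on.
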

\begin{proof}
A sketch of the proof is given in Appendix \ref{app:MRC and S-MMSE sketch}
\end{proof}

{ Let us examine the expression $\overline{\gamma}\LMMSE_{jk}$ \eqref{eq:gamma MMSE DE simplified}. %Similarly to MRC, the performance of S-MMSE is limited by pilot contamination given by the first term of the denominator. 
In contrast to MRC, intra-cell interference is completely eliminated in S-MMSE, and this for any propagation conditions. Furthermore, as can be seen by the the denominator of \eqref{eq:gamma MMSE DE simplified}, the entirety of inter-cell interference, including pilot contamination, hinders S-MMSE from achieving higher UL rates. These results can be explained by the structure of the S-MMSE detector, $\overline{\bg}\LMMSE$ \eqref{eq:G_MMSE_hat} that includes all the channels $\wbh_{jji}$, and thus, cancels intra-cell interference, yet, does not suppress inter-cell interference. Nevertheless, note that we can alleviate the effects of this latter by mitigating the uncorrelated inter-cell interference \footnote{Uncorrelated inter-cell interference refers to the total inter-cell interference minus the pilot contamination induced interference.} which is depicted by the second term of the denominator.  Indeed, examining this term unveils that uncorrelated inter-cell interference vanishes asymptotically when the matrix $\bQ_j$\eqref{eq:Q_hat} is diagonal. We propose to achieve this by a proper selection of $\bZs_j$ that designedly eliminates the off-diagonal elements of $\bQ_j$. Accordingly, we suggest two structures of $\bZs_j$ based on the propagation conditions. First, in favorable propagation conditions, it is sufficient to put $\bZs_j=\rho_d\bI_N$. Note that in such a setting, S-MMSE and MRC deliver the same performances given in corollary \ref{cor:MRC fav}. Second, in non-favorable propagation, \textit{i.e.} $\liminf_N \frac{1}{N}\overline{\bh}_{ji}\herm\overline{\bh}_{jk}>0$, $\forall i\neq k$, one solution is: $\bZs_j= \overline{\bH}_j(\overline{\bH}_j\herm\overline{\bH}_j)^{-1} {\bf D}_{j}(\overline{\bH}_j\herm\overline{\bH}_j)^{-1}\overline{\bH}_j\herm$, where ${\bf D}_j$ is an arbitrary diagonal matrix which can be used for further optimization of the ergodic rate.} { Choosing $\bZs_j$ as such renders $\bQ_j$ diagonal, and hence, eliminates the {uncorrelated} inter-cell interference when using S-MMSE detection.}
\par From Theorems \ref{th:MRC} and \ref{th:S-MMSE}, we deduce that for MRC and S-MMSE receivers, pilot contamination remains a tenacious limitation for massive MIMO systems with Rician fading channels, even under favorable propagation conditions. Additionally, although LoS components clearly enhance the received signal, they also cause interference. Normally, this latter lessens in favorable propagation conditions; however such environments are not constantly available. Therefore, more {research should be done to fully leverage the LoS component so it becomes an enabling performance factor under all propagation scenarios. For instance, a statistical processing scheme that exploits the LoS components and statistics of the scattered signals for LoS-prevailing environments is proposed in \cite{Uplink-ikramICC2018}.} 
    
\subsection{M-MMSE}
Prior to deriving a deterministic equivalent for $\gamma\MMMSE_{jk}$, we consider the following assumption \cite[Assumption 5]{Massive-Emil2018}:
\begin{assumption}\label{ass:LI matrices} $\forall j, \ell, k,$ with $\boldsymbol{\lambda}_{jk}=[\lambda_{j1}k,\dots,\lambda_{jLk}]^{T}$ and $\ell'=1,\dots,L$:
\begin{equation}
\liminf_N \underset{\{\boldsymbol{\lambda}_{jk}:\lambda_{j\ell'k=1}\}}{\inf}\frac{1}{N}\left\Vert\sum_{\ell=1}^{L}\lambda_{j\ell k}\bR_{j\ell k}\right\Vert_{F}^{2}>0.
\end{equation}
\end{assumption}
\newpage
\begin{theorem}\label{th:M-MMSE} Under assumptions \ref{ass:asymptotic} and \ref{ass:LI matrices}, we have:\\ 
\par $\frac{1}{N} \gamma\MMMSE_{jk} - \frac{1}{N} \overline{\gamma}\MMMSE_{jk} \asto 0  \text{, s.t:}$
\begin{equation}\label{eq:gamma MMMSE DE}
\frac{\overline{\gamma}\MMMSE_{jk}}{N} = \frac{1}{\left[\bT_{jk}^{-1}\right]_{11}} + \frac{1}{N} \overline{\bh}_{jk}\herm  \overline{\bQ}_{j,/k} \overline{\bh}_{jk}. %\beta^{\mbox{\tiny M}}_{jk,jj} - \bb_{jk}\herm \bC_{jk}^{-1} \bb_{jk}
\end{equation}
with the $L \times L$ block matrix $\bT_{jk}=\left[\begin{array}{cc}
\beta^{\mbox{\tiny M}}_{jk,jj} & \bb_{jk}\herm \\
\bb_{jk} &  \bC_{jk} \\ \end{array}\right],$ such that:
\begin{itemize}
\item $\beta^{\mbox{\tiny M}}_{jk,nm}= \frac{1}{N}\tr\left(\bR_{jnk} \bPhi_{jk} \bR_{jmk} \bZm_j\right)$,
\item $\bb_{jk}=\left[\beta^{\mbox{\tiny M}}_{jk,1 j} \dots \beta^{\mbox{\tiny M}}_{jk,(j-1) j},\beta^{\mbox{\tiny M}}_{jk,(j+1) j},\dots,\beta^{\mbox{\tiny M}}_{jk,L j}\right]$, %is the $(L-1)$ vector, whose $\ell-$th element is $\left[\bb_{jk}\right]_\ell = \beta^{\mbox{\tiny M}}_{jk,\ell j}$, $\ell\neq j$, 
\item $\bC_{jk}$ $\in\mathbb{R}^{(L-1)^2}$ with entries: $\left[\bC_{jk}\right]_{mn}= \beta^{\mbox{\tiny M}}_{jk,nm}, \ n\neq j \text{ and } m\neq j,$
\end{itemize}
and $\overline{\bQ}_{j,/k}= \left(\frac{1}{N}\overline{\bH}_{j,/k} {\bf D}^{-1}_{jk} \overline{\bH}_{j,/k}\herm+(\bZm_j)^{-1} \right)^{-1},$
where $\mathcal{B}\left({\bf D}_{jk},u,v\right)=diag\displaystyle{\left\{\beta^{\mbox{\tiny M}}_{jm,uv}\right\}_{\substack{m=1 \\ m\neq k}}^{K}}$, and the LoS matrix $\overline{\bH}_{j,/k}=$

$\left[\boldsymbol{0}_{(j-1)(K-1)} \ \overline{\bh}_{j,1} \dots \overline{\bh}_{j,k-1} \overline{\bh}_{j,k+1} \dots \overline{\bh}_{jK} \ \boldsymbol{0}_{(L-j)(K-1)}\right]$. %is the LoS matrix $\overline{\bH}_j$ %\overline{\bh}_{j,k-1} \overline{\bh}_{j,k+1} \dots \overline{\bh}_{jK}\right]$ . \\
\end{theorem}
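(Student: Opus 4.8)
The plan is to show that the normalized quadratic form $\frac{1}{N}\gamma\MMMSE_{jk}=\frac{1}{N}\wbhh_{jjk}{\bf M}^{-1}\wbh_{jjk}$, with ${\bf M}=\sum_{(\ell,i)\neq(j,k)}\wbh_{j\ell i}\wbhh_{j\ell i}+(\bZm_j)^{-1}$ from \eqref{eq:gamma M-MMSE}, concentrates around the stated quantity, and then conclude for the rate by the continuous mapping theorem. Two structural facts drive everything. First, because $K$ is fixed while $N\to\infty$, ${\bf M}$ is a \emph{finite-rank} perturbation (rank at most $LK-1$) of the deterministic, uniformly well-conditioned matrix $(\bZm_j)^{-1}$ (Assumption~\ref{ass:asymptotic}); hence all resolvents have uniformly bounded spectral norm and, crucially, any finite-rank modification shifts a \emph{normalized} trace $\frac{1}{N}\tr(\bA\,\cdot)$ by only $O(1/N)$. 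Second, the pilot-contamination structure: for a fixed user index the estimates across cells share one training observation, so the scattered part of $\wbh_{j\ell k}$ is $\bR_{j\ell k}\bPhi_{jk}\bw$ for a common $\bw\sim\mathcal{CN}(0,\bPhi_{jk}^{-1})$, with a LoS mean present only for the local ($\ell=j$) estimate. I would write $\wbh_{jjk}=\widetilde{\bf h}_{jjk}+\overline{\bh}_{jk}$ (zero-mean estimate plus LoS) and expand $\frac{1}{N}\wbhh_{jjk}{\bf M}^{-1}\wbh_{jjk}$ into a scattered--scattered, a LoS--LoS, and a cross contribution.

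For the scattered--scattered term I would peel off the purely scattered user-$k$ pilot cluster $\{\wbh_{j\ell k}\}_{\ell\neq j}$ from the remainder ${\bf M}_{/k}=(\bZm_j)^{-1}+\sum_{i\neq k}\sum_\ell \wbh_{j\ell i}\wbhh_{j\ell i}$, and use a standard block-matrix (Schur-complement) identity that expresses $\widetilde{\bf h}_{jjk}\herm{\bf M}^{-1}\widetilde{\bf h}_{jjk}$ as the reciprocal of the leading entry of the inverse of the $L\times L$ Gram matrix of the full cluster $\{\wbh_{j\ell k}\}_{\ell}$ (target $\ell=j$ first) measured through ${\bf M}_{/k}^{-1}$. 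The trace lemma collapses its $(n,m)$ entry $\frac{1}{N}\wbhh_{jnk}{\bf M}_{/k}^{-1}\wbh_{jmk}$ to $\frac{1}{N}\tr(\bR_{jnk}{\bf M}_{/k}^{-1}\bR_{jmk}\bPhi_{jk})$, and the resolvent-replacement ${\bf M}_{/k}^{-1}\to\bZm_j$ (legitimate to $O(1/N)$ inside a normalized trace, \emph{irrespective} of the LoS terms carried by ${\bf M}_{/k}$) turns this into $\beta^{\mbox{\tiny M}}_{jk,nm}$. Hence the Gram matrix converges entrywise to $\bT_{jk}$ and this term tends to $1/[\bT_{jk}^{-1}]_{11}$.

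The delicate part --- and the main obstacle --- is the LoS--LoS term $\frac{1}{N}\overline{\bh}_{jk}\herm{\bf M}^{-1}\overline{\bh}_{jk}$. After discarding the purely scattered user-$k$ cluster (a rank-one-perturbation estimate shows it changes this term by $O(1/N)$), I would apply an \emph{exact} Woodbury identity to ${\bf M}_{/k}=(\bZm_j)^{-1}+\widehat{\bf W}\widehat{\bf W}\herm$, where $\widehat{\bf W}$ stacks the \emph{full} estimates $\wbh_{j\ell i}$ of all other users $i\neq k$ across all cells. Everything then rests on the limits of the induced $L(K-1)$-dimensional Gram matrix $\frac{1}{N}\widehat{\bf W}\herm\bZm_j\widehat{\bf W}$ and side vector $\frac{1}{N}\overline{\bh}_{jk}\herm\bZm_j\widehat{\bf W}$. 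Here the pilot structure is essential: same-user cross-cell scattered blocks converge (trace lemma) to the $\beta^{\mbox{\tiny M}}$ coefficients assembling $\bD_{jk}$; the other users' LoS means survive as $\frac{1}{N}\overline{\bH}_{j,/k}\herm\bZm_j\overline{\bH}_{j,/k}$; while different-user scattered blocks and all LoS--scattered cross products vanish, and the side vector is LoS-dominated and tends to $\frac{1}{N}\overline{\bH}_{j,/k}\herm\bZm_j\overline{\bh}_{jk}$. Re-assembling the Woodbury expansion with these limits reproduces exactly $\frac{1}{N}\overline{\bh}_{jk}\herm\overline{\bQ}_{j,/k}\overline{\bh}_{jk}$. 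The point requiring care is that, within each local estimate of an interfering user, the LoS mean and the scattered fluctuation are entangled in a \emph{single} rank-one term and cannot be split into independent updates; it is precisely this entanglement, together with the cross-cell pilot correlations, that forces the scattered statistics to enter the LoS term through the weighting $\bD_{jk}^{-1}$ rather than leaving a bare $\overline{\bH}_{j,/k}\overline{\bH}_{j,/k}\herm$.

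Finally, the cross term $\frac{1}{N}\overline{\bh}_{jk}\herm{\bf M}^{-1}\widetilde{\bf h}_{jjk}$ vanishes by the law of large numbers for linear forms: after removing the $\bw$-dependence of ${\bf M}$ by rank-one perturbation, it becomes a normalized inner product of $\bw$ with an independent vector of norm $O(\sqrt N)$, hence $O(1/\sqrt N)$. It remains to verify that the limiting Gram matrices are invertible with bounded inverse, which is exactly the role of Assumption~\ref{ass:LI matrices}: asymptotic linear independence of $\{\bR_{j\ell k}\}_\ell$ keeps $\bT_{jk}$ (and likewise $\bD_{jk}+\frac{1}{N}\overline{\bH}_{j,/k}\herm\bZm_j\overline{\bH}_{j,/k}$) uniformly positive definite, so that $1/[\bT_{jk}^{-1}]_{11}=\Theta(1)$ and the normalization by $N$ --- i.e.\ the unbounded-rate phenomenon in which $\gamma\MMMSE_{jk}$ scales with $N$ --- is the correct one. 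Combining the three contributions and invoking the continuous mapping theorem then yields the claimed deterministic equivalent.
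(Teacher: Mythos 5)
Your proposal is correct and follows essentially the same route as the paper's proof: the same nested Woodbury expansions (first peeling off the user-$k$ pilot cluster, whose normalized Gram matrix through the reduced resolvent converges to $\bT_{jk}$, then expanding that resolvent over $\wbH_{j,/k}$, where the quadratic-form/trace lemma yields $\bD_{jk}$ while the interfering users' LoS means survive and reassemble, via a reverse Woodbury step, into $\overline{\bQ}_{j,/k}$), the same convergence lemma for quadratic forms, and the same reliance on Assumption~\ref{ass:LI matrices} to keep the limiting Gram matrices invertible. The only difference is bookkeeping: you split $\wbh_{jjk}$ into its LoS mean and scattered fluctuation at the outset and control three terms (including the vanishing cross term) separately, whereas the paper carries the full estimate through the expansion \eqref{eq:gamma M-MMSE expanded} and lets the LoS contributions emerge inside the limits \eqref{eq:hZh}--\eqref{eq:HZH}; these are equivalent reorganizations of the same argument.
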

\begin{proof}
A sketch of the proof is given in Appendix \ref{app:MMMSE DE}. 
\end{proof}
\par First, Assumption \ref{ass:LI matrices} implies that for every UE $k$ in cell $j$, the correlation matrices $\bR_{j\ell k}$, $\ell=1,\dots,L$, are asymptotically linearly independent. %{\color{blue}(refer to \cite{Massive-Emil2018} for more details)}. 
If this holds, as $ N \rightarrow \infty $, the invertibility of $\bT_{jk}$ is verified, as shown in \cite[Appendix G]{Massive-Emil2018}. Second, note that the provided expression in Theorem \ref{th:M-MMSE} is an approximation of the `$\frac{1}{N}-$scaled' SINR, $\frac{1}{N}\gamma\MMMSE$. In this line, Theorem \ref{th:M-MMSE} reveals that, under Assumptions \ref{ass:asymptotic} and \ref{ass:LI matrices}, using M-MMSE gives rise to an SINR (and by the same token a capacity) that grows unboundedly as $N\rightarrow \infty$. This outcome was demonstrated in \cite{Massive-Emil2018} for the correlated Rayleigh fading, and is validated here by Theorem \ref{th:M-MMSE} for correlated Rician fading systems. 
{Moreover, the contribution of the specular signals is epitomized by the term $\frac{1}{N} \overline{\bh}_{jk}\herm  \overline{\bQ}_{j,/k} \overline{\bh}_{jk}$, and as shall be seen with simulation results, stronger LoS components improve the performances generated by M-MMSE.} Accordingly, M-MMSE outperforms all the single-cell detectors including MRC and S-MMSE, whose performances remain limited by pilot contamination. This is due to the structure of $\bg\MMMSE_{jk}$ that involves estimates of all interfering channels $\bh_{j\ell i}$, $\forall(\ell,i)$, which mitigates both intra and inter-cell interference, and yields unbounded spectral efficiency. 

\section{Numerical Results}
\begin{figure}[h]
\centering
{\includegraphics[width=35mm, height=35mm]{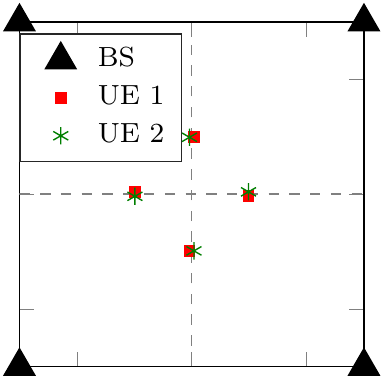}}
\vspace{1em}
\caption{\small Network setup with L=4 cells and K=2 cell-edge users, similarly to \cite{Massive-Emil2018}. The UEs with the same marker use the same pilot.}
\vspace{-1.0 em}
\label{fig:netSEt}
\end{figure}

In this section, we carry out MonteCarlo simulations over $1000$ channel realizations to validate, for finite system dimensions, the asymptotic results provided in Section \ref{sec:Analysis}. To this end, we consider a multi-cell massive MIMO with $L=4$ cells and inner cell-radius of $150$m. Each cell has $K=2$ cell-edge users with similar distance and angle of arrival $\theta$, to ensure high levels of pilot contamination, as depicted in Fig.\ref{fig:netSEt}. We fix $Tc=200$ symbols and $\tau=K$. 
Furthermore, the SNR is the same during training and data transmission (\textit{i.e.} $\beta_{j\ell k}\rho_d=\beta_{j\ell k}\rho_{tr}$ ). It is fixed at $-6$dB for intra-cell UEs and between $-6.3$dB and $-11.5 $dB for the interfering channels from other cells. Additionally, for intra-cell channels, the specular component $\overline{\bz}_{jk}$ follows the model $\left[\overline{\bf z}_{jk}\right]_ n= e^{-j(n-1)\pi\sin\left(\theta_{jjk}\right)}$. Finally, the results are represented in terms of UL rate \eqref{eq:sum-rate} in two scenarios.  
\paragraph*{Scenario I}
In the first setting, we illustrate the performances as a function of the number of  antennas, with different values of the Rician factors $\kappa_{jk}$, and considering the exponential correlation model\cite{Channel-Loyka2001}, such that the elements of the correlation matrix $\boldsymbol{\Theta}_{j\ell k}$ of channel $\bh_{j\ell k}$ are given by : 
\begin{equation}
\left[\boldsymbol{\Theta}_{j\ell k}\right]_{mn} = r^{\vert m -n\vert } e^{\left(i(m-n)\theta_{j\ell k}\right)},
\end{equation}
where $r=0.5$. Accordingly, Fig.\ref{fig:UL rate diff kappaz} displays the corresponding UL achievable rate obtained by MRC, S-MMSE and M-MMSE. Solid and dashed lines depict empirical and asymptotic results, respectively.
\begin{figure}[h]
\centering
\includegraphics[width=\width, height=\height,]{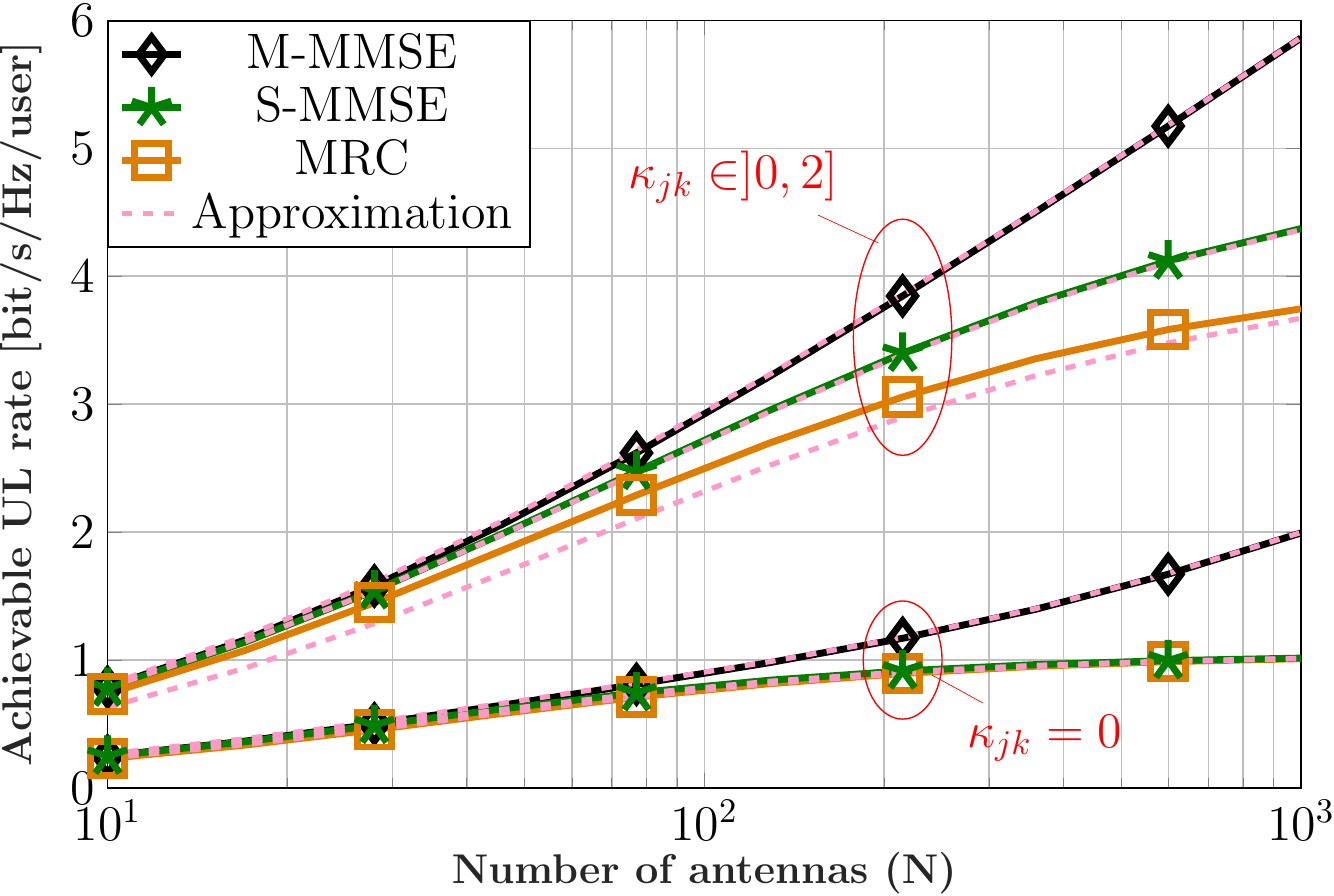}
\vspace{1em}
\caption{\small UL rate using MRC, S-MMSE and M-MMSE in Rayleigh ($\kappa_{jk}=0$) and Rician fading with $\kappa_{jk}\in]0,2]$ for a different number of antennas N, using the exponential channel correlation model with $r=0.5$. Solid and dashed lines depict empirical and asymptotic results, respectively.}
%\vspace{-1.0 em}
\label{fig:UL rate diff kappaz}
\end{figure}
 As can be seen in Fig.\ref{fig:UL rate diff kappaz}, for the three receivers, a better ergodic capacity is obtained in the presence of the LoS (curves with $ 0< \kappa_{jk}\leq 2$), relatively to the case with only scattered signals ($\kappa_{jk}=0$). Plus, S-MMSE provides increasingly better performances than MRC as $\kappa_{jk}$ takes higher values. However, both reach a plateau as $N$ grows large, which is due to pilot contamination, as demonstrated in Theorems \ref{th:MRC} and \ref{th:S-MMSE}. On the other hand, M-MMSE outperforms both single-cell combining schemes, MRC and S-MMSE, and clearly scales linearly with the number of antennas, thus confirming the results of Theorem \ref{th:M-MMSE}. 

\paragraph*{Scenario II}
We now move on to illustrating the performances of the system with the uncorrelated fading model with independent log-normal large-scale variations over the array, such that:
  \begin{equation}
\boldsymbol{\Theta}_{j\ell k} = diag\left\{ 10^{f_i/10}\right\}_{i=1}^{N}, \ \text{ where  } f_i\sim \mathcal{N}(0,\sigma_c^2).
  \end{equation}
To this end, for $N=200$ antennas and different values of $\kappa_{jk}$, we represent in Fig.\ref{fig:diagonal corr} the achievable UL rate with respect to the standard deviation of the fading variations over the array, $\sigma_c$. As can be observed, S-MMSE and MRC generate comparable rates for all values of $\kappa_{jk}$. Furthermore, M-MMSE yields a significantly larger capacity than with MRC and S-MMSE, { except for the special case at ($\sigma_c=0$) corresponding to Rayleigh fading ($\kappa_{jk}=0$), wherein the correlation matrices become linearly dependent, as previously observed in \cite{Massive-Emil2018}. Nevertheless, note that in this particular setting ($\sigma_c=0$), for Rician fading (curves with $\kappa_{jk}\in ]0,2]$), M-MMSE still outperforms both single-cell combining schemes.} 
Finally, Fig.\ref{fig:UL rate diff kappaz} and Fig.\ref{fig:diagonal corr} clearly validate the accuracy of the asymptotic approximations provided in Section \ref{sec:Analysis} (dashed-lines curves).  
\begin{figure}
\centering
\includegraphics[width=\width, height=\height,]{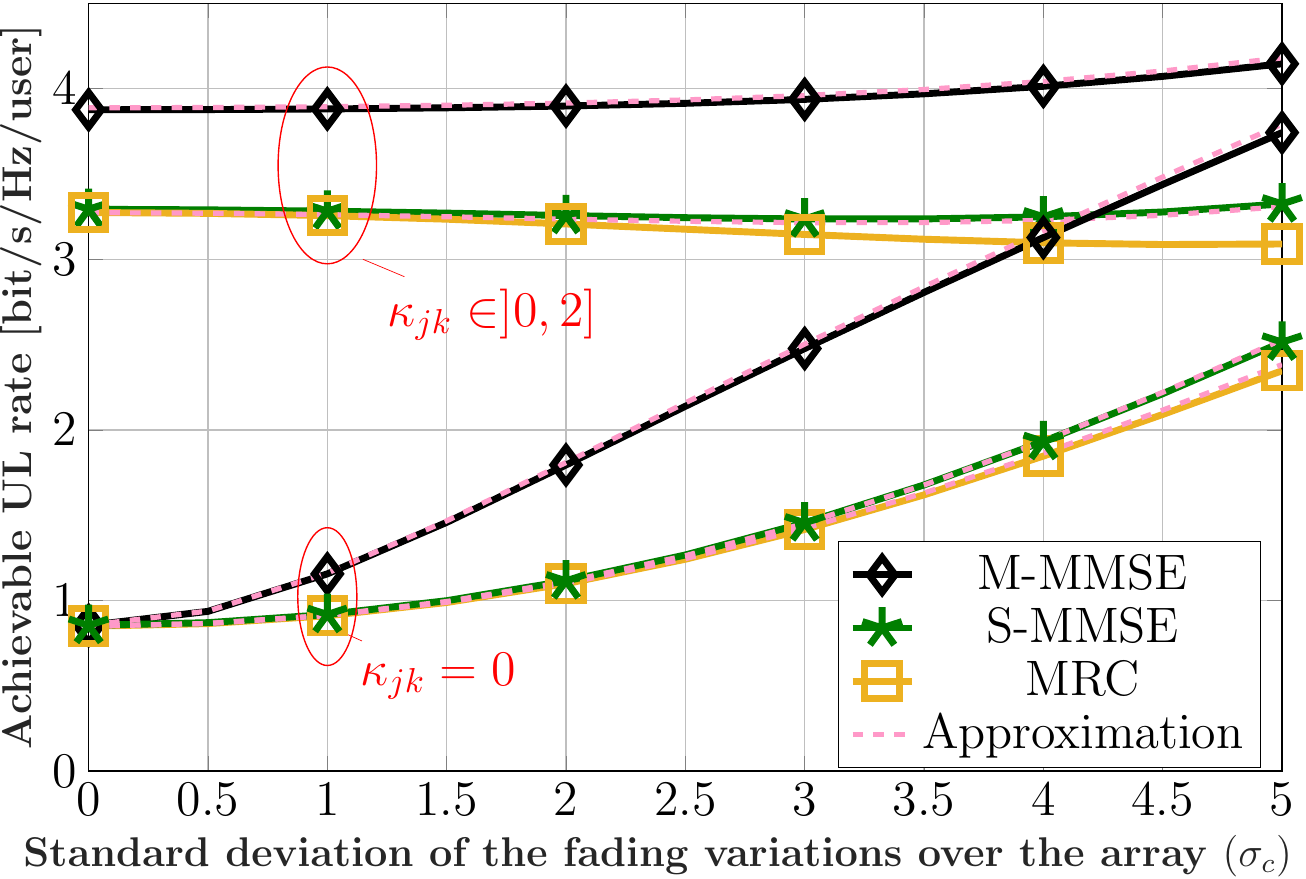}
\vspace{1em}
\caption{\small UL rate using MRC, S-MMSE and M-MMSE in Rayleigh ($\kappa_{jk}=0$) and Rician fading with $\kappa_{jk}\in]0,2]$ for different large-scale fading variations over the array, with $N=200$. Solid and dashed lines depict empirical and asymptotic results, respectively.}
%\vspace{-1.2 em}
\label{fig:diagonal corr}
\end{figure}
\section{Conclusion}
We studied in this work the UL performances of multi-cell massive MIMO systems with correlated Rician fading channels, under the assumption of imperfect channel estimates. Considering the large-antenna limit, we derived closed-form approximations of the achievable rates for different single-cell and multi-cell combining schemes. For the single-cell detection case, we analyzed the MRC and S-MMSE receivers that were shown to produce higher gains as the LoS signals became stronger; yet remain limited by LoS induced interference and pilot contamination as the number of antennas grows large. In contrast, for multi-cell combining, we analytically demonstrated that M-MMSE outperforms any single-cell combining technique. In fact, it provides unlimited capacity that scales linearly with the number of antennas, and is further enhanced in the presence of LoS. A selection of numerical results validates our analysis as it shows that the derived approximations are accurate for finite system dimensions, albeit computed assuming the asymptotic antenna-limit. Plus, these expressions are close to realistic systems since per-user channel correlation and Rician-factor are considered. As such, they provide a theoretical framework that can be harnessed to perform further analysis of similar networks.

%\vspace{-0.2em}
\begin{appendices}
\begin{figure*}[h!]
\begin{equation}\label{eq:gamma M-MMSE expanded}
\frac{\gamma\MMMSE_{jk} }{N} = \underbrace{\frac{1}{N}  \wbhh_{jjk} {\bA}_{j,/k}^{-1} \wbh_{jjk}}_{{\Pi}_1} - \underbrace{\frac{1}{N} \wbhh_{jjk} {\bA}_{j,/k}^{-1} \wbH_{jk,/j}}_{{\Pi}_2} \underbrace{\left(\frac{1}{N}\bI_{L-1} + \frac{1}{N} \wbHh_{jk,/j} \bA_{j,/k}^{-1}\wbH_{jk,/j} \right)^{-1}}_{{\Pi}_3}  \underbrace{ \frac{1}{N} \wbHh_{jk,/j} \bA_{j,/k}^{-1} \wbh_{jjk}}_{{\Pi}\herm_2}.
\end{equation}
\vspace{-1em}
\hrulefill
\end{figure*}
%\vspace{-1em}
\section{Sketch of Proof of Theorems \ref{th:MRC} and \ref{th:S-MMSE}}\label{app:MRC and S-MMSE sketch}
For both Theorems \ref{th:MRC} and \ref{th:S-MMSE}, the derivations rely most often on the same arguments. Therefore, due to space limitations, we mention in this appendix the pertinent steps to derive the asymptotic approximation $\overline{\gamma}\LMMSE_{jk}$ \eqref{eq:gamma MMSE DE simplified}, since this latter is more involved than $\overline{\gamma}\MRC_{jk}$ \eqref{eq:gamma MRC DE}. Therefore, in the sequel of this appendix, $\bg_{jk}$ refers to the S-MMSE receiver \eqref{eq:G_MMSE_hat}.

{\footnotesize \textbullet\ }First we need the following preliminary result : 
Define $\tilde{\bQ}_j=\left(\frac{1}{N}\wbH\herm_{jj} \bZs_j \wbH_{jj}+ \frac{1}{N}\bI_K\right)^{-1}$. Under assumption \ref{ass:asymptotic}, the LLN allows us to put $\frac{1}{N} \left[\wbH\herm_{jj} \bZs_j \wbH_{jj}\right]_{ik}-\frac{1}{N}\mathbb{E} \left[\wbhh_{jji} \bZs_j \wbh_{jjk}\right] \asto 0. 
$ Therefore, using the continuous mapping theorem \cite{book-Prob95}, we have :
% \begin{equation}\label{eq:QtildeQ}
$ [\tilde{\bQ}_j]_{ik} - \left[{\bQ}_j\right]_{ik} \asto 0$,
% \end{equation} 
where the matrix $\bQ_j$ is given in \eqref{eq:Q_hat}.
%\end{equation}  
\par Now, applying the Woodbury identity on the signal term yields : %\begin{equation}\label{eq:sig simp}
$\left\vert \bg\herm_{jk}\wbh_{jjk}\right\vert^2= \left|1-\frac{1}{N} [\tilde{\bf Q}_j]_{kk}\right|^2$.  Accordingly, considering the above equalities, we can find an asymptotic equivalent of the signal: $\vert{\bf g}_{jk}\herm\wbh_{jjk}\vert^2$ $- \left|1-\frac{1}{N} [\bQ_j]_{kk}\right|^2$ $\asto 0$. Likewise, the same steps allow to derive approximations of the intra-cell interference term, estimation errors term and processed noise. 
As to the inter-cell interference, we follow the same reasoning except that we take into account the correlation between the estimates and the interfering channels that share the same pilot, s.t, $\forall \ell\neq j$ : $\frac{1}{N}\mathbb{E} \left[\wbhh_{jji} \bZs_j \wbh_{j\ell i}\right]= \frac{1}{N}\tr(\bR_{j\ell i}\bPhi_{jk}{\bR}_{jj i}\bZs_j)=\beta_{ji, \ell j}^{\mbox{\tiny S}} $. Finally, based on the continuous mapping Theorem \cite{book-Prob95}, putting all these deterministic equivalents together yields the asymptotic approximation of the SINR given in Theorem \ref{th:S-MMSE}. 
%\vspace{-0.5em}
\section{Sketch of Proof of Theorem  \ref{th:M-MMSE}}\label{app:MMMSE DE}

We summarize in this appendix the main steps to obtain the approximation given in Theorem \ref{th:M-MMSE}.  Define the following quantities:
\begin{flushleft}
{%\small
{\footnotesize \textbullet\ } $\bA_{j,/k}={\sum_{\substack{\ell=1}}^{L}\sum_{\substack{i=1 \\ i\neq k}}^{K}\wbh_{j\ell i}\wbhh_{j\ell i}}+(\bZm_j)^{-1}$
$\phantom{=\bA_jkk} = \wbH_{j,/k}\wbH_{j,/k}\herm+(\bZm_j)^{-1}$ ,}
\end{flushleft}
{\footnotesize \textbullet\ }$\wbH_{jk,/j} \in \mathbb{C}^{N\times (L-1)}$ : matrix that contains all the estimates $\wbh_{j \ell k}$, $\forall \ell$, $\ell\neq j$, (\textit{i.e.} all pilot contamintors of channel $\bh_{jjk}$. Plus it does not include any LoS components).Therefore, using the Woodburry identity, we can rewrite $\gamma\MMMSE_{jk}$ as \eqref{eq:gamma M-MMSE expanded}, \textit{(top of this page)}. Next, deriving approximations for $\Pi_1,\Pi_2$ and, $\Pi_3$ is, for the most part, fairly similar to the correlated Rayleigh fading treated in \cite{Massive-Emil2018}. Accordingly, due to space limitations, we provide the main differences that rise in correlated Rician fading and refer the reader to \cite[Appendix F]{Massive-Emil2018} for further details about the common steps. 
\par First, let us focus on the term $\Pi_1$. Using the Woodburry identity enables to rewrite {$\frac{1}{N}\bA_{j,/k}^{-1}= \frac{1}{N} \bZm_j - \frac{1}{N} \bZm_j \wbH_{j,/k} \left( \frac{1}{N} \wbH_{j,/k} \herm \bZm_j \wbH_{j,/k} +\frac{1}{N} \bI_{L(K-1)}\right)^{-1} \frac{1}{N}\wbHh_{j,/k}\bZm_j$}. Note that $\wbH_{j,/k}$, includes all the estimates $\wbh_{j\ell i}$, $\forall \ell$, with $i\neq k$. Therefore, it is fully uncorrelated with $\wbh_{jjk}$, and only has Rician components for $\ell =j$. Consequently, recalling the $\bD_{jk}$ and $\overline{\bH}_{j,/k}$ defined in Theorem \ref{th:M-MMSE}, a direct application of the convergence of quadratic forms lemma \cite{eigenvaluesoutside-silverstein2009}, yields:
{%\small
% \vspace{-0.5 em}
\begin{align} 
&\frac{1}{N} \wbhh _{jjk}\bZm_j\wbh_{jjk}-\left\{\beta^{\mbox{\tiny M}}_{jk,jj}+ \frac{1}{N} \overline{\bh}_{jk}\herm \bZm_j \overline{\bh}_{jk}\right\}\asto 0,\label{eq:hZh} \\
&\frac{1}{N} \wbhh _{jjk}\bZm_j\wbH_{j,/k}- \left\{\frac{1}{N} \overline{\bh}_{jk}\herm \bZm_j \overline{\bH}_{j,/k}\right\} \asto 0,\label{eq:hZH}\\
&\frac{1}{N} \wbH_{j,/k} \herm \bZm_j \wbH_{j,/k} -\left\{\bD_{jk}+ \frac{1}{N} \overline{\bH}_{j,/k}\herm \bZm_j \overline{\bH}_{j,/k}\right\}\asto 0 .\label{eq:HZH}
\end{align}}
%\vspace{-0.3em}
Then, putting together the deterministic equivalents in \eqref{eq:hZh}-\eqref{eq:HZH} provides an approximation of $\Pi_1$. On another note, since $\wbH_{jk,/j}$ does not include any Rician components, dealing with $\Pi_2$ and $\Pi_3$, amounts, to the same reasoning as in \cite[Appendix F]{Massive-Emil2018}. Finally, adding the obtained asymptotic approximations leads to the expression given in Theorem \ref{th:M-MMSE}.
\end{appendices}
%\vspace{-0.5 em}
\bibliographystyle{IEEEtran} 	% (uses file "plain.bst")
\bibliography{ref}
\end{document}